\let\NAT@parse\undefined
\newcommand{\Rbb}{\mathbb{R}}
\newcommand{\scp}[2]{\langle #1, #2 \rangle}
\newtheorem{theorem}{Theorem}
\newtheorem{proposition}{Proposition}
\newtheorem{corollary}{Corollary}
\newcommand{\supp}{{\rm supp}\,}
\newcommand{\tinv}[1]{{\textstyle\frac{1}{#1}}}
\newcommand{\sign}{{\rm sign}\,}
\newcommand{\ud}{\mathrm{d}} 
\newcommand{\ie}{\emph{i.e.}, } 
\newcommand{\eg}{\emph{e.g.}, }
\DeclareMathOperator*{\argmax}{argmax}
\DeclareMathOperator{\Id}{\bf Id}
\newcommand{\cl}{\mathcal}
\newcommand{\fk}{\mathfrak}
\newcommand{\bs}{\boldsymbol}
\newcommand{\bb}{\mathbb}
\newcommand{\sq}{\vspace{0mm}}
\newcommand{\sqe}{\vspace{0mm}}
\newcommand{\qpt}{q}
\newcommand{\xin}{\bs x_0}
\title{\vspace{0mm}Quantized Iterative Hard Thresholding:\\[1mm]
\LARGE Bridging 1-bit and High-Resolution\\[1mm] Quantized Compressed Sensing}
\author{Laurent Jacques$^*$, 
K\'evin Degraux and
Christophe De Vleeschouwer$^*$\\[1mm]
\small ICTEAM Institute, ELEN Department, Universit\'e catholique de
    Louvain (UCL)\\[-2mm]}
\date{\today}
\begin{document}

\maketitle

\begin{abstract}
In this work, we show that reconstructing a sparse signal from
quantized compressive measurement can be achieved in an unified
formalism whatever the (scalar) quantization resolution, \ie from 1-bit to high
resolution assumption. This is achieved by generalizing
the \emph{iterative hard thresholding} (IHT) algorithm and its binary variant
(BIHT) introduced in previous works to enforce the consistency of the
reconstructed signal with respect to the quantization model. The
performance of this algorithm, simply called \emph{quantized} IHT (QIHT), is evaluated in comparison with other
approaches (\eg IHT, \emph{basis pursuit denoise}) for several quantization scenarios.  
\end{abstract}

{\renewcommand{\thefootnote}{*}
\footnotetext{LJ and CDV are funded by the Belgian F.R.S-FNRS. Part of this research
is supported by the DETROIT project (WIST3), Walloon Region,
Belgium. {\em Acknowledgements:} We thank Prasad Sudhakar (UCL/ICTEAM)
and the anonymous reviewers of \textsc{Sampta} 2013 for their useful
comments. {\em Note:} This document is a preprint related to another work accepted in
Sampta13, Bremen, Germany.}}

\sq
\section{Introduction}
\label{sec:introduction}
\sq

Since the advent of Compressed Sensing (CS) almost 10 years ago~\cite{donoho2006cs,candes2008rip}, many
works have treated the problem of inserting this theory into an appropriate quantization scheme. This step is indeed
mandatory for transmitting, storing and even processing any
compressively acquired information, and more generally for  sustaining
the embedding of the CS principle in sensor design.   

In its most popular version, CS provides uniform theoretical guarantees for stably recovering
any sparse (or compressible) signal at a sensing rate
proportional to the signal intrinsic dimension (\ie its \emph{sparsity} level)
\cite{donoho2006cs,candes2008rip}. In this context, scalar
quantization of compressive measurements has been considered
along two main directions.

First, under a high-resolution quantization assumption, \ie when the
number of bits allocated to encode each measurement is high, the
quantization impact is often modeled as a mere additive Gaussian noise
whose variance is adjusted to the quantization $\ell_2$-distortion
\cite{Dai2009}. In short, under this high-rate model, the
CS stability guarantees under additive Gaussian noise, \ie as derived
from the $\ell_2-\ell_1$ instance optimality~\cite{candes2008rip}, are used to
bound the reconstruction error obtained from quantized
observations. Variants of these works handle quantization saturation
\cite{LasBouDav::2009::Demcracy-in-action}, prequantization noise
\cite{Zymnis2009}, $\ell_p$-distortion models ($p\geq 2$) for improved reconstruction in oversampled regimes
\cite{Jacques2010,jacques_arxiv_2012_nonunfibpdq}, optimize the
high-resolution quantization procedure~\cite
{SunGoy::20090::Optimal-quantization} or integrate more evolved
$\Sigma\Delta$-quantization models departing from scalar PCM quantization~\cite{gunturk2010sobolev}. 

Second, and more recently, extreme 1-bit quantization recording only the sign of the
compressive measurement, \ie an information encoded in a single bit,
has been considered~\cite{BouBar::2008::1-Bit-compressive, jacques2011robust, Plan2011, plan2012robust}. New guarantees have been
developed to tackle the non-linear nature of the sign operation thanks
to the replacement of the \emph{restricted isometric property} (RIP) by the
quasi-isometric \emph{binary $\epsilon$-stable embedding} (B$\epsilon$SE)~\cite{jacques2011robust}, or to
more general characterization of the binary embedding of sets based on their Gaussian Mean Width~\cite{Plan2011, plan2012robust}.
In this context, iterative methods such as the \emph{binary iterative hard
thresholding}~\cite{jacques2011robust} or linear programming optimization
\cite{Plan2011} have been introduced for estimating the 1-bit sensed signal.
  
This work proposes a general procedure for handling the reconstruction
of sparse signals observed according to a standard non-uniform scalar
quantization of the compressive measurements. The novelty of this
scheme is its ability to handle any resolution level, from 1-bit to
high-resolution, in a progressive fashion. Conversely to the Bayesian
approach of~\cite{yang2012unified}, our method relies on a
generalization of the \emph{iterative hard thresholding} (IHT)
\cite{blumensath2009iterative} that we simply called \emph{quantized iterative
hard thresholding}. Actually, QIHT reduces to BIHT for 1-bit sensing
and it converges to IHT at high resolution. 

\paragraph*{Conventions} Most of domain dimensions (\eg $M$, $N$) are
denoted by capital roman letters. Vectors and matrices are associated
to bold symbols while lowercase light letters are associated to scalar
values.  The $i^{\rm th}$ component of a vector $\bs u$ is $u_i$ or
$(\bs u)_i$. The identity matrix is $\Id$. 
%Vectors of zeros and ones are denoted by $\bs 0$ and $\bs 1$ respectively.
The set of indices in $\Rbb^D$ is $[D]=\{1,\,\cdots,D\}$. Scalar product between
two vectors $\bs u,\bs v \in \Rbb^{D}$ reads $\bs u^* \bs v = \scp{\bs
  u}{\bs v}$ (using the transposition $(\cdot)^*$), while the Hadamard
product $\bs u \odot \bs v$ is such that $(\bs u \odot \bs v)_i = u_i v_i$. For any $p\geq 1$,
$\|\cdot\|_p$ represents the $\ell_p$-norm such that $\|\bs u\|_p^p =
\sum_i |u_i|^p$ with $\|\bs u\|=\|\bs u\|_2$ and $\|\bs u\|_\infty =
\max_i |u_i|$. The $\ell_0$ ``norm'' is $\|\bs u\|_0 = \# \supp \bs
u$, where $\#$ is the cardinality operator and $\supp \bs u = \{i: u_i
\neq 0\} \subseteq [D]$.  For $\cl S \subseteq [D]$, $\bs u_{\cl S}\in
\Rbb^{\#\cl S}$ (or $\bs \Phi_{\cl S}$) denotes the vector (resp. the
matrix) obtained by retaining the components (resp. columns) of $\bs
u\in\Rbb^D$ (resp. $\bs \Phi\in\Rbb^{D'\times D}$) belonging to $\cl
S\subseteq [D]$. The operator $\cl H_K$ is the hard thresholding operator setting all the
coefficients of a vector to 0 but those having the $K$ strongest
amplitudes. The set of canonical $K$-sparse vectors in $\Rbb^N$ is $\Sigma_K=\{\bs v\in\Rbb^N: \|\bs v\|_0 \leq
K\}$ while $\Sigma_{\cl T}$ denotes the set of vectors whose support
is $\cl T\subseteq [N]$. Moreover, $\Sigma^*_K=\Sigma_K\cap S^{N-1}$
and $\Sigma^*_{\cl T}=\Sigma^*_{\cl T}\cap S^{N-1}$ with $S^{N-1}$ the
$(N-1)$-sphere in $\Rbb^N$. Finally, $\chi_{\cl I}$ is the
characteristic function on $\cl I\subset \Rbb$, $\sign \lambda$ equals $1$ if
$\lambda$ is positive and $-1$ otherwise, $(\lambda)_+ = (\lambda +
|\lambda|)/2$ and $(\lambda)_- = -(-\lambda)_+$ project $\lambda$ on
$\Rbb_+$ and $\Rbb_-$, respectively, with all these operators being applied component wise onto vectors. 

\sq
\section{Noisy Compressed Sensing Framework}
\label{sec:noisy-compr-sens}
\sq

The \emph{iterative hard thresholding} (IHT) algorithm has been introduced for
iteratively reconstructing a sparse or compressible signal $\bs x\in\Rbb^N$ from
compressible observations $\bs y = \bs \Phi \bs x + \bs n$, where $\bs
\Phi\in\Rbb^{M \times N}$ is the sensing matrix and $\bs
n\in\Rbb^M$ stands for a possible observational noise with bounded
energy $\|\bs n\|\leq \varepsilon$. IHT is an
alternative to the \emph{basis pursuit denoise} (BPDN) method~\cite{Chen98atomic}
which aims at solving a global convex minimization promoting a $\ell_1$-sparse
data prior model under the constraint of reproducing the
compressive observation. 

Assuming that $\bs x$ is $K$-sparse in the canonical basis $\bs \Psi =
\Id$, \ie $\bs x \in \Sigma_K$, the IHT algorithm is designed to approximately solve the (LASSO-type) problem\sqe
\begin{equation}
  \label{eq:lasso-iht}
\min_{\bs u\in\Rbb^N} \tinv{2}\|\bs y - \bs \Phi\bs u\|^2\
\text{s.t.}\ \|\bs u\|_0\leq K.\sqe  
\end{equation}
It proceeds by computing the following recursion\sqe
\begin{align*}
\bs x^{(n+1)} = \cl H_K\big[\bs x^{(n)} + \mu\bs\Phi^*(\bs y - \bs\Phi\bs
x^{(n)})\big],\sqe \tag{\footnotesize \bf IHT} 
\end{align*}
where $\bs x^{(0)} = \bs 0$, and $\mu>0$ must satisfy $\mu^{-2} >
\|\bs\Phi\|:=\sup_{\bs u:\|\bs u\|=1} \|\bs\Phi\bs u\|$ for guaranteeing convergence~\cite{blumensath2011accelerated}. 

In other words, at each iteration, starting from the previous
estimation $\bs x^{(n)}$, the fidelity
function $\cl E(\bs u) := \tinv{2}\|\bs y - \bs \Phi\bs u\|^2$ is decreased by a
gradient descent step with gradient $\bs\nabla \cl E(\bs
x^{(n)}) = \bs\Phi^*(\bs\Phi\bs
x^{(n)}-\bs y)$, followed by a ``projection'' on $\Sigma_K$
accomplished by the hard thresholding $\cl H_K$.

In~\cite{blumensath2009iterative}, it is shown that if $\bs \Phi$
respects the \emph{restricted isometry property} (RIP) of order $3K$
with radius $\delta_{3K}<1/15$, which means that for all $\bs u\in\Sigma_{3K}$,
$(1-\delta_{3K})\|\bs u\|^2\leq \|\bs\Phi\bs u\|^2\leq (1+\delta_{3K})\|\bs
u\|^2$, then, at iteration $n^*= \lceil \log_2 \|\bs
x\|/\varepsilon\rceil$, the reconstruction error satisfies $\|\bs x - \bs x^{(n^*)}\| \leq 5 \varepsilon$.  

\sq
\section{Quantized Sensing Model}
\label{sec:framework}
\sq

For the sake of simplicity, let us consider a unit $K$-sparse signal $\xin \in \Sigma^*_K$ observed through the following Quantized
Compressed Sensing (QCS) model\sqe
\begin{equation}
  \label{eq:qcs-model}
  \bs y = \cl Q_b[\bs\Phi \xin],\sqe
\end{equation}
where $\bs \Phi\in\Rbb^{M\times N}$ is the sensing matrix and $\cl
Q_b$ the quantization operator defined at a \emph{resolution} of
$b$-bits per measurement, \ie with no further encoding treatment, $\bs
y$ requires a total of $\fk B = b M$ bits. In this work, we will not
consider any prequantization noise in \eqref{eq:qcs-model}. 

The quantization $\cl Q_b$ is assumed optimal with respect to the
distribution of each component of $\bs z = \bs \Phi \xin \in
\Rbb^M$. In particular, by considering only random Gaussian matrices
$\bs \Phi \sim \cl N^{M\times N}(0,1)$, \ie where each matrix entry
follows $\Phi_{ij} \sim_{\rm iid} \cl N(0,1)$, we
have $z_i \sim \cl N(0, \|\xin\|^2 = 1)$ and we
adjust $\cl Q_b$ to an optimal $b$-bits Gaussian Quantizer minimizing
the quantization distortion, \eg using
a Lloyd-Max optimization~\cite{gray1998quantization}. This provides a
set of thresholds $\{\tau_i \in \bar\Rbb: 1\leq i\leq 2^b+1\}$
(with $-\tau_1=\tau_{2^b+1}=+\infty$) defining $2^b$ quantization bins $\cl R_i=[\tau_i,\tau_{i+1})$, and a set of
levels $\{\qpt_i \in \cl R_i:1\leq i \leq 2^b\}$ such that\sqe 
$$
\cl Q_b[\lambda] = \qpt_k\ \Leftrightarrow\ \lambda \in \cl R_k,\sqe
$$
with $2\tau_i = \qpt_{i-1}+\qpt_i$ and $\qpt_i = \bb E[g_x| g_x\in
\cl R_i]$ with $g_x \sim \cl N(0, 1)$. Notice that this QCS model includes 1-bit CS scheme since $\cl
Q_1[\lambda] = \qpt_0\,\sign(\lambda)$ with
$\qpt_0 := \qpt_2=-\qpt_1 = \sqrt{2/\pi}$.

\sq
\section{Quantized Iterative Hard Thresholding}
\label{sec:quant-iter-hard}
\sq

In this section, we propose a generalization of the IHT algorithm
taking into account the particular nature of the scalar quantization
model introduced in Sec.~\ref{sec:framework}. The idea is to enforce
the consistency of the iterates with the quantized observations. This
is first achieved by defining an appropriate cost measuring
deviation from quantization consistency.  

Given $\nu,\lambda\in\bb R$ and using the levels and thresholds
associated to $\cl Q_b$, we first define\sqe
\begin{align}
  \label{eq:consist_energ_compos}
%  J(\nu, \lambda)&= \sum_{j=2}^{2^b} w_j\,
%  \chi_{\cl I}(\tau_j)\,|\nu - \tau_j|\\
J(\nu, \lambda)&= \sum_{j=2}^{2^b} w_j\, \big|  \big( \sign(\lambda - \tau_j)\,
  (\nu - \tau_j) \big)_-\big|,\sqe\sqe    
\end{align}
with $w_j=\qpt_j-\qpt_{j-1}$. Equivalently, given $\cl
I(\nu,\lambda):=[\min(\nu,\lambda),\max(\nu,\lambda)]$, $J(\nu,
\lambda) = \sum_{j=2}^{2^b} w_j \chi_{\cl I}(\tau_j)\,|\nu - \tau_j|$.
The non-zero terms are therefore
determined by the thresholds lying between $\lambda$ and $\nu$, \ie
for which $\sign(\lambda - \tau_j) \neq \sign(\nu - \tau_j)$. Interestingly, $J(\nu; \lambda) =
J(\nu; \cl Q_b(\lambda))$ since $\sign(\lambda - \tau_j) =
\sign(\cl Q_b(\lambda) - \tau_j)$ for all $j\in [2^b+1]$. 

Then, our
quantization consistency function between two vectors $\bs u, \bs v \in \bb R^M$ reads\sqe
\begin{equation}
  \label{eq:eq:consist_energ}
  \cl J(\bs u, \bs v) := \sum_{k=1}^M J(u_k, v_k) = \cl J(\bs u, \cl
  Q_b(\bs v)).\sqe  
\end{equation}

This cost, which is convex with respect to $\bs u$, has two interesting limit cases. First, for $b=1$, it
reduces to the cost on which relies the \emph{binary
  iterative hard thresholding} algorithm (BIHT) adapted to 1-bit CS~\cite{jacques2011robust}. In this
context, the sum in \eqref{eq:consist_energ_compos} has only one term
(for $j=2$) and $\cl J(\bs u, \bs v) = 2\qpt_0 \,\|(\sign(\bs v)\odot\bs u
)_-\|_1$. Up to a normalization by $2\qpt_0$, this is the $\ell_1$-sided norm minimized by BIHT
which vanishes when $\qpt_0\,
\sign(\bs u) = \cl Q_1(\bs u) = \cl Q_1(\bs v) = \qpt_0\,
\sign(\bs v)$, with $\qpt_0$ defined in Sec.~\ref{sec:framework}.

\begin{figure}
  \centering
  \includegraphics[width=.6\columnwidth]{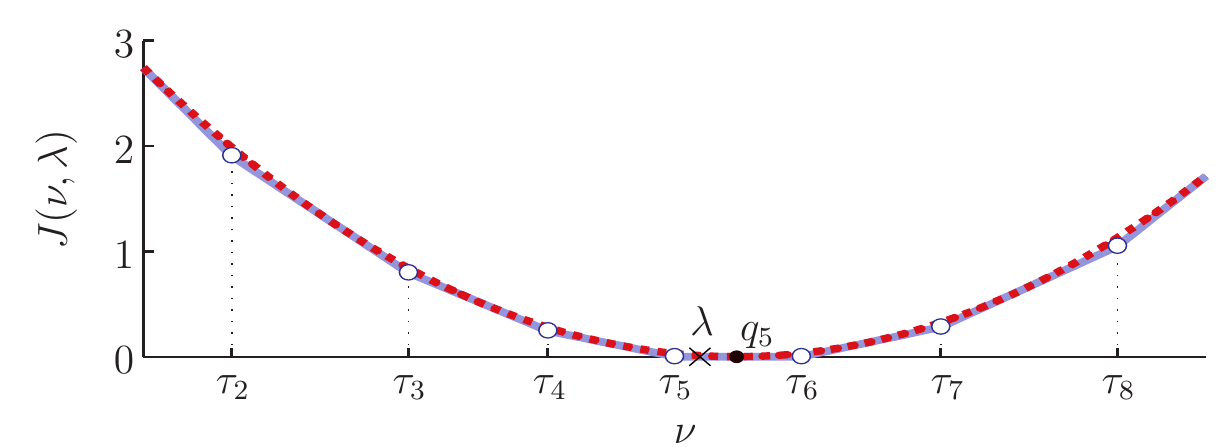}\\[-1mm]
  \caption{(plain curve) Plot of $J$ as a function of $\nu\in\Rbb$
    for $b=3$ ($\tau_5=0$) and $\lambda\in\cl R_5$. (dashed curve)
    Plot of $\tinv{2}(\nu-\qpt_5)^2$.}%\vspace{-6mm}
  \label{fig:j-trend}
\end{figure}

Second, in the high resolution limit when $b\gg 1$, $\cl J(\bs u,\bs v)$ tends to
$\tinv{2}\|\bs u - \bs v\|^2$. Indeed, in this case $w_j \ll 1$ and, 
the sum in \eqref{eq:consist_energ_compos} tends to 
$$
J(\nu,\lambda) \simeq \big| {\textstyle\int}_{\nu}^{\lambda} (\nu -t)\, \ud t\, \big| = \tinv{2}
(\nu - \lambda)^2. 
$$
This asymptotic quadratic behavior of $J$ is illustrated in Fig.~\ref{fig:j-trend}.

Given the quantization consistency cost $\cl J$, we can now formulate a
generalization of \eqref{eq:lasso-iht} for estimating a $K$-sparse
signal $\xin$ observed by the model \eqref{eq:qcs-model}: %$\bs y = \cl Q_b[\bs \Phi \xin]$: 
\begin{equation}
  \label{eq:lasso-qiht}
\min_{\bs u\in\Rbb^N} \cl E_b(\bs u)\
\text{s.t.}\ \|\bs u\|_0\leq K,  
\end{equation}
with $\cl E_b(\bs u) := \cl J(\bs \Phi\bs u, \bs y) = \cl J(\bs
\Phi\bs u, \cl Q_b[\bs\Phi \xin])$. 

Following the procedure determining the IHT algorithm from \eqref{eq:lasso-iht}
(Sec.~\ref{sec:noisy-compr-sens}), our aim is to find an IHT
variant which minimizes the quantization inconsistency, as measured by
$\cl E_b$, instead of the quadratic cost $\cl E$. This is done by
first determining a \emph{subgradient} of the convex but non-smooth
function $\cl E_b$~\cite{rockconvex}.  

A quick calculation shows that a subdifferential of $J(\nu, \lambda)$ with respect to
$\nu$ reads\sqe 
\begin{equation}
  \label{eq:J-subgrad}
  \sum_{j=k_-+1}^{k_+} \tfrac{w_j}{2} ( \sign (\nu - \tau_j) - \sign
(\lambda - \tau_j) ),\sqe
\end{equation}
where $k_-=\min(k_\nu,k_\lambda)$, $k_+=\max(k_\nu,k_\lambda)$, and $k_\nu$ and $k_\lambda$ are the bin indices of $\cl Q_b(\nu)$ and
$\cl Q_b(\lambda)$ respectively. From the definition of the $w_j$, the
sum simplifies to $\qpt_{k_\nu} - \qpt_{k_\lambda}$. Therefore, a subgradient of $\cl J(\bs u, \bs v)$ with respect
to $\bs u$ reads simply $\cl Q_b(\bs u) - \cl Q_b(\bs v)$, so that a
subgradient of $\cl J(\bs \Phi\bs u, \bs y)$ with respect to $\bs
u$ corresponds to $\bs \Phi^*(\cl Q_b(\bs\Phi \bs u) - \bs y)$.

Therefore, from this last ingredient, we define the \emph{quantized
  iterative hard thresholding algorithm} (QIHT) by the recursion 
\begin{align*}
\hspace{-1.7mm}\bs x^{(n+1)} = \cl H_K\big[\bs x^{(n)} + \mu\bs\Phi^*\big (\bs y - \cl Q_b(\bs\Phi\bs
x^{(n)})\big)\big],\tag{\footnotesize \bf QIHT}  
\end{align*}
where $\bs x^{(0)} = 0$ and $\mu$ is
set hereafter. 

\sq
\section{QIHT analysis}
\label{sec:qiht-analysis}
\sq

Despite successful simulations of sparse
signal recovery from quantized measurements (see Sec.~\ref{sec:experiments}), we were not able to prove
the stability and the convergence of the QIHT algorithm yet. However,
there exist a certain number of promising properties suggesting the
existence of such a result. The first one comes from a limit case analysis. Except for the normalizing factor $\mu$, QIHT at 1-bit ($b=1$) reduces to BIHT
\cite{jacques2011robust}. Moreover, when $b\gg 1$, $\cl Q_b[\bs z]
\simeq \bs z$ for $\bs z\in\Rbb^M$ and we recover the IHT algorithm. 
These limit cases are consistent with the previous
observations made above on the asymptotic behaviors of $\cl J$ in
these two cases. 

Second, as for the modified Subspace Pursuit algorithm~\cite{Dai2009}, QIHT is designed for improving
the quantization consistency of the current iterate with the quantized
observations. For the moment, the importance of this improvement can
only be understood in 1-bit. Given $0<\delta <1$,
when $M = O(\delta^{-1} K \log N)$ and with high probability on the
drawing of a random Gaussian matrix $\bs \Phi \sim \cl N^{M\times N}(0,1)$, 
$\|\tfrac{\bs a}{\|\bs a\|} - \tfrac{\bs b}{\|\bs b\|}\| \leq \delta$
if $\cl Q_1(\bs \Phi\bs a) =  \cl
Q_1(\bs \Phi\bs b)$ for all $\bs a, \bs b \in
\Sigma_K$~\cite{jacques2011robust}.  Actually, it is shown in Appendix~\ref{sec:angul-dist-bounds} that if no more than $r$
components differ between $\cl Q_1(\bs \Phi\bs a)$ and $\cl Q_1(\bs
\Phi\bs b)$, then, with high probability on $\bs\Phi$, 
\begin{equation}
  \label{eq:almost-consistent-relation}
  \|\tfrac{\bs a}{\|\bs a\|} - \tfrac{\bs b}{\|\bs
  b\|}\| \leq (\tfrac{K+r}{K})\,\delta,
\end{equation}
for $M = O(\delta^{-1} K\log MN)$. We understand then the beneficial
impact of any increase of consistency between $\cl Q_1(\bs\Phi \bs x^{(n)})$
and $\bs y$ at each QIHT iteration.

Third, the adjustment of $\mu$, which is decisive for QIHT efficiency, leads also to some interesting
observations. Extensive simulations not presented
here pointed us that, for $\bs \Phi \sim \cl N^{M\times M}(0,1)$, $\mu
\propto 1/M$ seems to be a universal rule of efficiency at any bit
rate. Interestingly, this setting was already characterized for IHT
where $\mu \simeq 1/(1+\delta_{2K})$ if the sensing matrix respects the RIP
property with radius $\delta_{2K}$~\cite{blumensath2011accelerated}.
Since $\bs \Phi/\sqrt{M}$ is RIP for $\bs\Phi \sim \cl N^{M \times N}(0,1)$
as soon as $M=O(K \log N/K)$ this is equivalent to impose $\mu \simeq 1/M$. 

At the other extreme, the rule $\mu \propto 1/M$ is also consistent with the following 1-bit analysis. In
\cite{plan2012robust}, it is shown that the mapping $\bs u \to
\sign(\bs \Phi \bs u)$ respects an interesting
property that we arbitrary call \emph{sign product
  embedding}\footnote{In~\cite{plan2012robust},
  more general embeddings than this of $\Sigma_K$ are studied.} (SPE):
\begin{proposition}
\label{def:bpe-def-unif}
Given $0<\delta<1$, there exist two constants $c,C>0$ such that, if
$M \geq C\delta^{-6} K \log N/K$, then, with a probability higher than $1-8\exp(-c\delta^2 M)$, $\bs\Phi \sim
\cl N^{M\times N}(0,1)$ satisfies 
\begin{equation}
  \label{eq:bin-rpd-emb}
 \big|\mu^*\scp{\sign(\bs\Phi\bs u)}{\bs \Phi \bs v} -
 \scp{\bs u}{\bs v}\big|\leq \delta,\quad \forall \bs u, \bs v \in \Sigma^*_K,
\end{equation}
with $\mu^*= 1/(\qpt_0\,M)$. When $\bs u$ is fixed, the condition on $M$ is relaxed to $M
\geq C\delta^{-2} K \log N/K$.
\end{proposition}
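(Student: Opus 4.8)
The plan is to prove the statement as a consequence of a standard concentration-plus-net argument applied to the bilinear form $F(\bs u,\bs v):=\mu^*\scp{\sign(\bs\Phi\bs u)}{\bs\Phi\bs v}$. First I would fix $\bs u,\bs v\in\Sigma^*_K$ and compute $\E F(\bs u,\bs v)$. Writing $\bs\Phi\bs u = (\scp{\bs\varphi_i}{\bs u})_i$ with $\bs\varphi_i$ the $i$-th row of $\bs\Phi$, each summand is $\sign(\scp{\bs\varphi_i}{\bs u})\scp{\bs\varphi_i}{\bs v}$; since $\bs\varphi_i$ is standard Gaussian and $\|\bs u\|=\|\bs v\|=1$, the pair $(\scp{\bs\varphi_i}{\bs u},\scp{\bs\varphi_i}{\bs v})$ is jointly Gaussian with correlation $\scp{\bs u}{\bs v}$. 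A one-dimensional Gaussian computation gives $\E[\sign(g_1)g_2]=\sqrt{2/\pi}\,\scp{\bs u}{\bs v}=\qpt_0\scp{\bs u}{\bs v}$, so with $\mu^*=1/(\qpt_0 M)$ we get $\E F(\bs u,\bs v)=\scp{\bs u}{\bs v}$, which is exactly the target quantity. This identifies \eqref{eq:bin-rpd-emb} as a concentration statement: $F(\bs u,\bs v)$ is an average of $M$ i.i.d. terms concentrating around its mean.

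Next I would establish the pointwise tail bound. The summand $X_i:=\sign(\scp{\bs\varphi_i}{\bs u})\scp{\bs\varphi_i}{\bs v}$ has the same magnitude as $|\scp{\bs\varphi_i}{\bs v}|$, hence is a sub-Gaussian random variable with an absolute constant sub-Gaussian norm; so $\frac1M\sum_i X_i - \E X_i$ obeys a Bernstein/Hoeffding-type bound $\Pr[|F(\bs u,\bs v)-\scp{\bs u}{\bs v}|>t]\le 2\exp(-c' M t^2)$ for $t$ of order a constant. Applying this with $t=\delta/2$ (say) gives failure probability $\le 2\exp(-c' M\delta^2/4)$ for each fixed pair.

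The standard part is then the union bound over a net. For the fixed-$\bs u$ regime, I would take a $\delta$-net $\cl N$ of the sphere $\Sigma^*_{\cl T}$ inside each of the $\binom{N}{K}\le (eN/K)^K$ coordinate subspaces $\cl T$, of cardinality $|\cl N|\le (3/\delta)^K$ per subspace; union-bounding the pointwise estimate over all $\bs v$ in this net and over all supports costs a factor $(eN/K)^K(3/\delta)^K$, which is absorbed provided $M\ge C\delta^{-2}K\log(N/K)$ — exactly the relaxed condition stated. The uniform-in-both-arguments case requires a net over pairs $(\bs u,\bs v)$; here the obstacle is that $F$ is \emph{not} Lipschitz in $\bs u$ (the $\sign$ is discontinuous), so one cannot directly transfer the net estimate to all of $\Sigma^*_K$ by a crude Lipschitz argument. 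This is the step I expect to be the main difficulty. The way around it, following \cite{plan2012robust}, is to control how often $\sign(\scp{\bs\varphi_i}{\bs u})$ flips as $\bs u$ moves within a net cell: for $\bs u,\bs u'$ with $\|\bs u-\bs u'\|\le\eta$, the number of indices $i$ with $\sign(\scp{\bs\varphi_i}{\bs u})\neq\sign(\scp{\bs\varphi_i}{\bs u'})$ is, with high probability, at most $O(\eta M)$ (each such flip requires $\scp{\bs\varphi_i}{\bs u}$ to lie in a thin slab, an event of probability $O(\eta)$, and one concentrates the count); each flip changes a summand by at most $2|\scp{\bs\varphi_i}{\bs v}|$, contributing an extra error of order $\eta\cdot\mu^* M\cdot O(1)$. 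Choosing the net fineness $\eta$ polynomially small in $\delta$ — this is the source of the $\delta^{-6}$ versus $\delta^{-2}$ loss — lets the flip-induced error be controlled at level $\delta$ while the net cardinality $(C/\eta)^{2K}$ over all support pairs is still absorbed into $M\ge C\delta^{-6}K\log(N/K)$. Combining the mean computation, the pointwise sub-Gaussian tail, the net union bound, and the slab/flip-counting argument for the discontinuity yields \eqref{eq:bin-rpd-emb} with the stated probability $1-8\exp(-c\delta^2 M)$.
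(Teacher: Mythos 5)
The paper does not actually prove Proposition~\ref{def:bpe-def-unif}: it is imported as a known result from \cite{plan2012robust} (see the sentence introducing it and the attached footnote), so there is no internal proof to compare your argument against. What you have written is, in effect, a reconstruction of the strategy behind the cited result, and it is the right strategy: the mean computation $\E[\sign(g_1)g_2]=\sqrt{2/\pi}\,\scp{\bs u}{\bs v}=\qpt_0\scp{\bs u}{\bs v}$ is exactly what makes the normalization $\mu^*=1/(\qpt_0 M)$ work; the pointwise sub-Gaussian concentration at rate $\exp(-c M\delta^2)$ explains both the probability bound and the relaxed condition $M\gtrsim\delta^{-2}K\log(N/K)$ in the fixed-$\bs u$ case; and you correctly identify the discontinuity of $\bs u\mapsto\sign(\bs\Phi\bs u)$ as the obstruction to a naive net argument in the uniform case.

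Two steps of your sketch would need real work before this could stand as a proof, and they are precisely where the $\delta^{-6}$ comes from. First, the claim that at most $O(\eta M)$ signs flip as $\bs u$ moves within a net cell must hold \emph{uniformly} over all cells and all $K$-sparse directions simultaneously; the per-index ``thin slab'' probability $O(\eta)$ only gives a pointwise count, and making it uniform is itself a tessellation-type statement (this is the content of the random hyperplane tessellation theorem in \cite{plan2012robust}, not a routine union bound). Second, bounding the contribution of the flipped indices requires controlling $\sum_{i\in S}|\scp{\bs\varphi_i}{\bs v}|$ uniformly over all $\bs v\in\Sigma^*_K$ and all index sets $S$ with $\#S\lesssim\eta M$; via Cauchy--Schwarz and an RIP-type bound this is $O(\sqrt{\eta}\,M)$, not $O(\eta M)$ as your ``each flip contributes $O(1)$'' accounting suggests, which is one reason the net scale $\eta$ must be taken polynomially small in $\delta$ and the final exponent degrades from $\delta^{-2}$ to $\delta^{-6}$. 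With those two steps made precise (or simply by citing \cite{plan2012robust}, as the paper does), the outline is sound.
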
 

When $\bs \Phi$ respects \eqref{eq:bin-rpd-emb}, we simply write that
$\bs \Phi$ is SPE$(\Sigma^*_K,\delta)$. When $\bs u$ is fixed, we say that $\bs \Phi$ is locally
SPE$(\Sigma^*_K,\delta)$ on $\bs u$. This SPE property leads to an interesting phenomenon.
\begin{proposition}
\label{prop:1-bit-Thresholding-guarantee-unif}
Given $\bs x \in \Sigma^*_K$ and let $\bs\Phi \in \Rbb^{M\times N}$
be a matrix respecting the local SPE$(\Sigma^*_{2K}, \delta)$ on $\bs
x$ for some
$0<\delta<1$. Then, given $\bs y = \cl Q_1[\bs \Phi \bs x] =
\qpt_0\,\sign (\bs \Phi \bs x)$, the
vector $$\hat{\bs x} := \tfrac{1}{\qpt^2_0 M}\cl H_K(\bs \Phi^* \bs y),$$
satisfies
$\|\bs x - \hat{\bs x}\| \leq 2\delta$.
\end{proposition}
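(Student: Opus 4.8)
The plan is to strip away the quantization and the randomness and reduce the claim to an elementary estimate about the hard-thresholding operator under the local SPE hypothesis. First I would rewrite the candidate reconstruction: since $\bs y = \qpt_0\,\sign(\bs\Phi\bs x)$ and $\mu^*=1/(\qpt_0 M)$, setting $\bs a := \tfrac{1}{\qpt^2_0 M}\bs\Phi^*\bs y = \mu^*\bs\Phi^*\sign(\bs\Phi\bs x)$ gives $\hat{\bs x}=\cl H_K(\bs a)$. The local SPE$(\Sigma^*_{2K},\delta)$ property on $\bs x$ says exactly that $|\scp{\mu^*\bs\Phi^*\sign(\bs\Phi\bs x)-\bs x}{\bs v}| = |\scp{\bs a-\bs x}{\bs v}|\leq\delta$ for all $\bs v\in\Sigma^*_{2K}$. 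Choosing $\bs v$ to be the unit vector supported on an arbitrary index set $\cl S'$ with $\#\cl S'\leq 2K$ and proportional to $\bs a-\bs x$ there (the bound being trivial when that restriction vanishes) turns this into the coordinate statement $\|(\bs a-\bs x)_{\cl S'}\|\leq\delta$; this is the only property of $\bs\Phi$ that will enter the argument.

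Next I would introduce $\cl T:=\supp\bs x$, $\hat{\cl T}:=\supp\hat{\bs x}$ and $\cl S:=\cl T\cup\hat{\cl T}$, so that $\#\cl S\leq 2K$ and $\bs x-\hat{\bs x}$ is supported on $\cl S$. Splitting $\bs x-\hat{\bs x} = (\bs x-\bs a)_{\cl S}+(\bs a-\hat{\bs x})_{\cl S}$ and using the triangle inequality, the first term has norm at most $\delta$ by the coordinate bound with $\cl S'=\cl S$. For the second term, on $\hat{\cl T}$ we have $\hat{\bs x}=\bs a$, so $(\bs a-\hat{\bs x})_{\cl S}$ is supported on $\cl S\setminus\hat{\cl T}=\cl T\setminus\hat{\cl T}$ and equals $\bs a$ there; hence it remains to bound $\|\bs a_{\cl T\setminus\hat{\cl T}}\|$.

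This last bound is the only genuine computation. Using the definition of $\cl H_K$, every entry of $\bs a$ indexed in $\cl T\setminus\hat{\cl T}$ has magnitude no larger than every entry indexed in $\hat{\cl T}\setminus\cl T$ (the latter sit among the $K$ retained largest-magnitude coefficients), and $\#(\cl T\setminus\hat{\cl T})\leq\#(\hat{\cl T}\setminus\cl T)$ because $\#\cl T\leq K$ while $\#\hat{\cl T}=\min(K,\|\bs a\|_0)$; the degenerate case $\|\bs a\|_0<K$ forces $\bs a_{\cl T\setminus\hat{\cl T}}=0$ outright. A short counting argument then yields $\|\bs a_{\cl T\setminus\hat{\cl T}}\|\leq\|\bs a_{\hat{\cl T}\setminus\cl T}\|$. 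Since $\bs x$ vanishes on $\hat{\cl T}\setminus\cl T$, we have $\|\bs a_{\hat{\cl T}\setminus\cl T}\| = \|(\bs a-\bs x)_{\hat{\cl T}\setminus\cl T}\|\leq\delta$ by the coordinate bound once more, and adding the two contributions gives $\|\bs x-\hat{\bs x}\|\leq 2\delta$.

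I expect the main obstacle to be entirely bookkeeping rather than conceptual: making the thresholding comparison rigorous in the presence of ties in $\cl H_K$, handling the degenerate sparsity case cleanly, and checking that every coordinate restriction invoked really has support of size at most $2K$ so that the local SPE hypothesis is legitimately applicable. No concentration inequality is needed at this stage, since all the probability has already been packaged into the local SPE assumption on $\bs\Phi$.
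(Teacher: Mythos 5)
Your argument is correct and is essentially the paper's own proof: both rest on the observation that the local SPE hypothesis yields $\|(\bs a-\bs x)_{\cl S'}\|\le\delta$ for every index set $\cl S'$ with $\#\cl S'\le 2K$ (via the dual characterization of the restricted norm), combined with the near-optimality of the hard-thresholding step. The only divergence is in how that second ingredient is used: the paper notes in one line that $\hat{\bs x}=\cl H_K(\bs a)$ is the best $K$-term approximation of $\bs a_{\cl S}$ on $\cl S=\cl T\cup\hat{\cl T}$, so $\|\hat{\bs x}-\bs a_{\cl S}\|\le\|\bs x-\bs a_{\cl S}\|$ and the triangle inequality gives $\|\bs x-\hat{\bs x}\|\le 2\|\bs x-\bs a_{\cl S}\|\le 2\delta$, whereas your counting bound $\|\bs a_{\cl T\setminus\hat{\cl T}}\|\le\|\bs a_{\hat{\cl T}\setminus\cl T}\|$ reproves exactly that optimality coordinate-wise and then applies the SPE bound a second time on the (at most $K$-sized) set $\hat{\cl T}\setminus\cl T$; the two routes are interchangeable and yield the same constant.
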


\begin{proof}
  Let us define $\cl T_0 = \supp\bs x$, $\cl T
  = \cl T_0\cup \supp \hat{\bs x}$, and $\bs a = \tfrac{1}{\qpt^2_0 M}\bs
  \Phi^* \bs y = \mu^* \bs \Phi^*\sign (\bs \Phi \bs x)$ with $\hat{\bs x} = \cl
  H_K(\bs a)$. Then $\hat{\bs x}$ is also the best $K$-term approximation $\bs a_{\cl T}=\bs
  \Phi^*_{\cl T}\bs y$, so that $\|\bs x - \hat{\bs x}\| \leq \|\bs x -
  \bs a_{\cl T}\|+\|\hat{\bs x} - \bs
    a_{\cl T}\| \leq 2 \|\bs x - \bs a_{\cl T}\|$. Therefore, since $\|\bs
    x - \bs a_{\cl T}\| = \sup_{\bs w \in \Sigma^*_{\cl T}} \scp{\bs w}{\bs
      x - \bs a_{\cl T}}$ and $\bs \Phi$ is SPE$(\Sigma^*_{2K}, \delta)$, 
$\|\bs x - \hat{\bs x}\| \leq 2\sup_{\bs w \in \Sigma^*_{\cl T}} \big(
    \scp{\bs w}{\bs x} - \mu^*\scp{\bs \Phi\bs w}{\sign(\bs \Phi \bs
      x)}\big) \leq 2\sup_{\bs w \in \Sigma^*_{\cl T}}\big(
    \scp{\bs w}{\bs x} - \scp{\bs w}{\bs x} + \delta \big) = 2\delta$,
using $\supp(\bs x -
\bs a_{\cl T})\subseteq \cl T$ with $\#\cl T \leq 2K$.
\end{proof}

This proposition shows that a single hard thresholding of $\tfrac{1}{\qpt^2_0 M}\bs\Phi^*
\bs y$ already provides a good estimation of $\bs x$. Actually, from
the condition on $M$ for reaching the local SPE, we deduce that $\|\bs x
- \hat{\bs x}\| = O(\sqrt{K/M})$. This is quite satisfactory for such a
simple $\bs x$ estimation and it suggests setting $\mu \propto 1/M$ in
QIHT for $b=1$ where $\hat{\bs x}$ is related to $\bs x^{(1)}$.

Noticeably, it has been recently observed in 
\cite{bahmani13} that $\hat{\bs
  x}' :=\hat{\bs
  x}/\|\hat{\bs x}\|$ is actually solution of 
$$
\argmax_{\bs u\in\Rbb^N}\ \scp{\bs y}{\bs\Phi\bs u}\quad
{\rm s.t.}\quad \|\bs u\|_0 \leq K,
$$
for which there exists the weaker
error bound $\|\bs x
- \hat{\bs x}'\|^2 = O(\sqrt{K/M})$ when $\bs x$ is fixed~\cite{plan2012robust}.

\sq
\section{Experiments}
\label{sec:experiments}
\sq

\begin{figure}
  \centering
  \newlength{\comsz}
  \newlength{\comsep}
  \setlength{\comsz}{0.33\textwidth}
  \setlength{\comsep}{4mm}
%  \mbox{\!\!\!\includegraphics[width=\comsz]{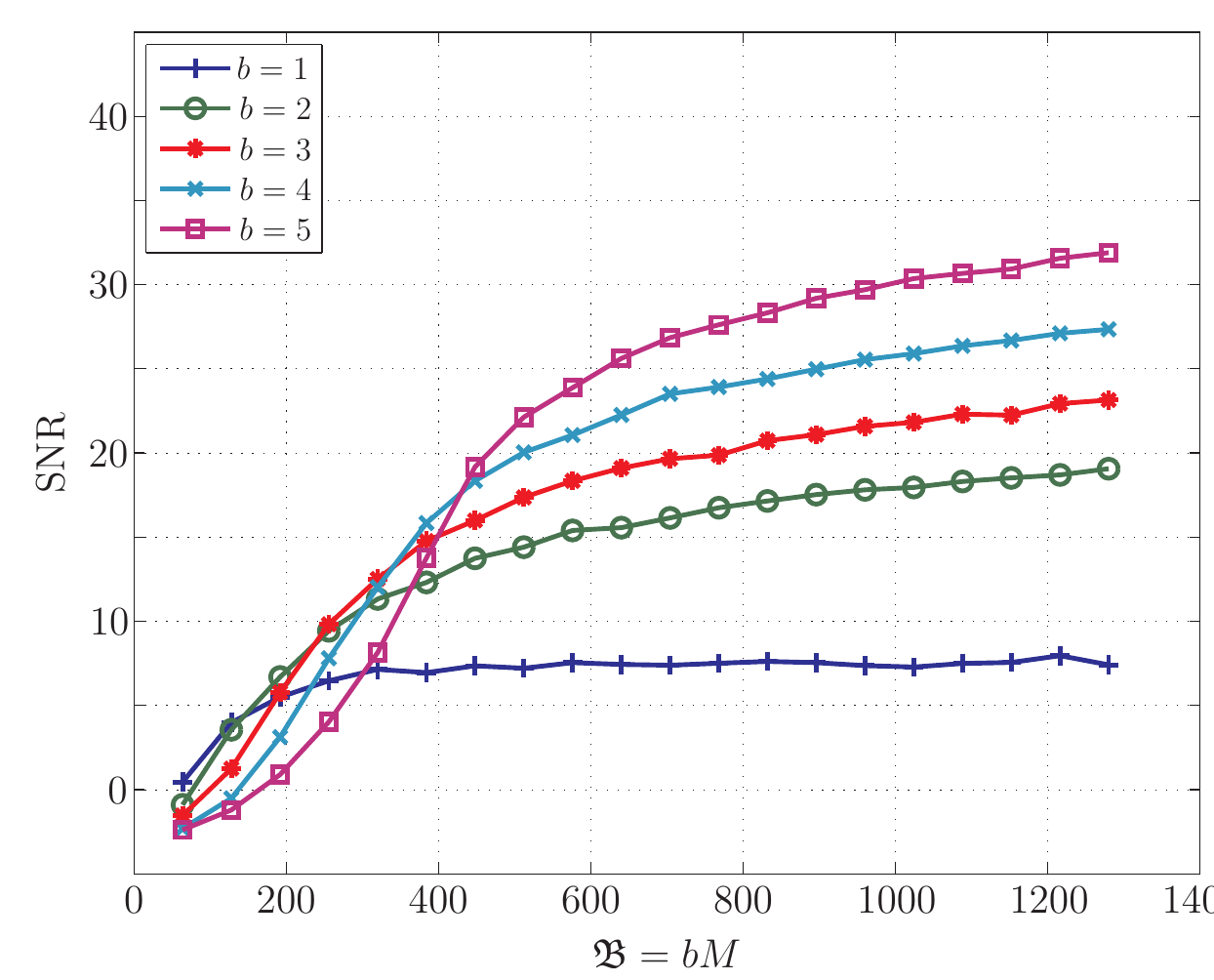}\hspace{\comsep}
%    \includegraphics[width=\comsz]{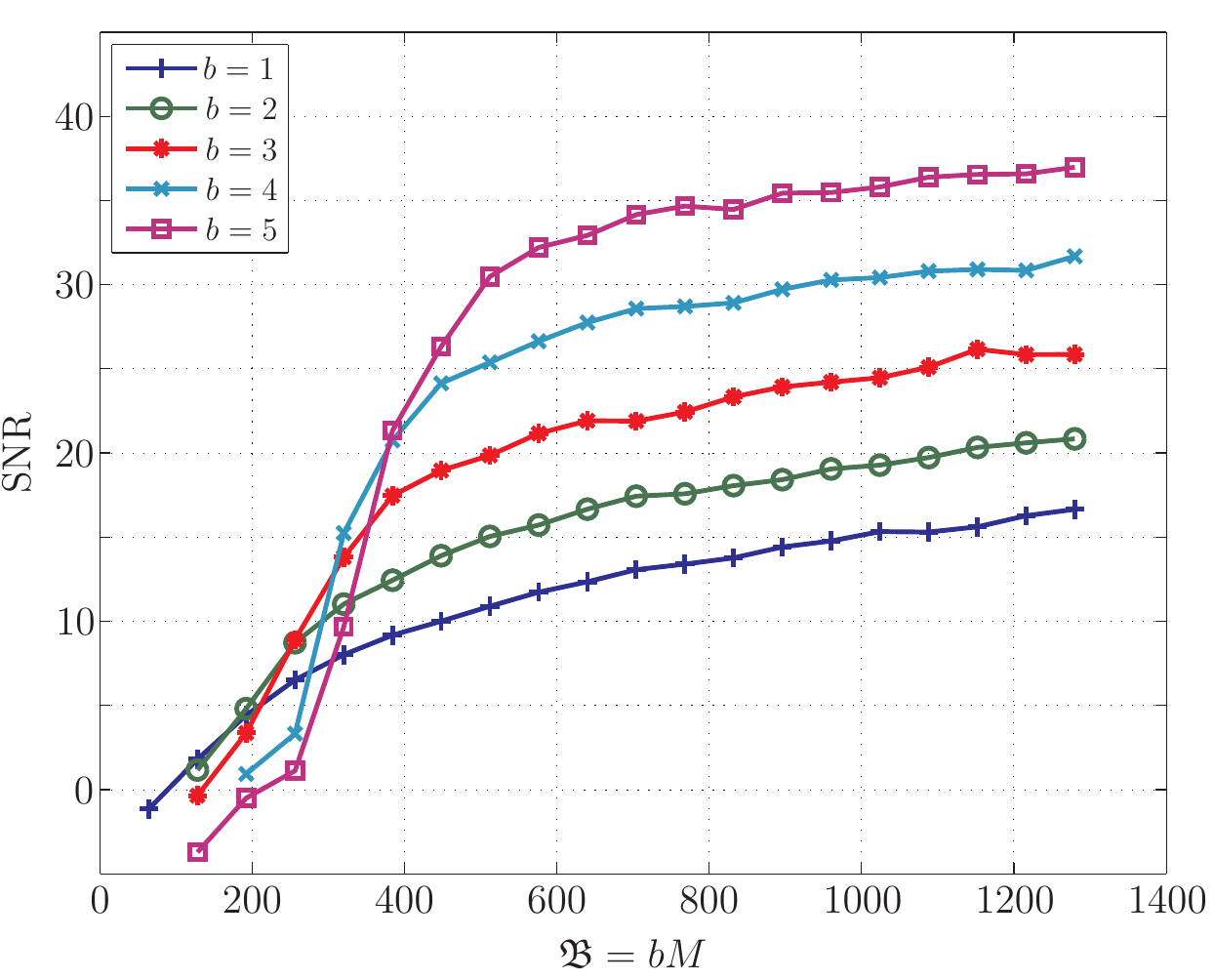}\hspace{\comsep}
%    \includegraphics[width=\comsz]{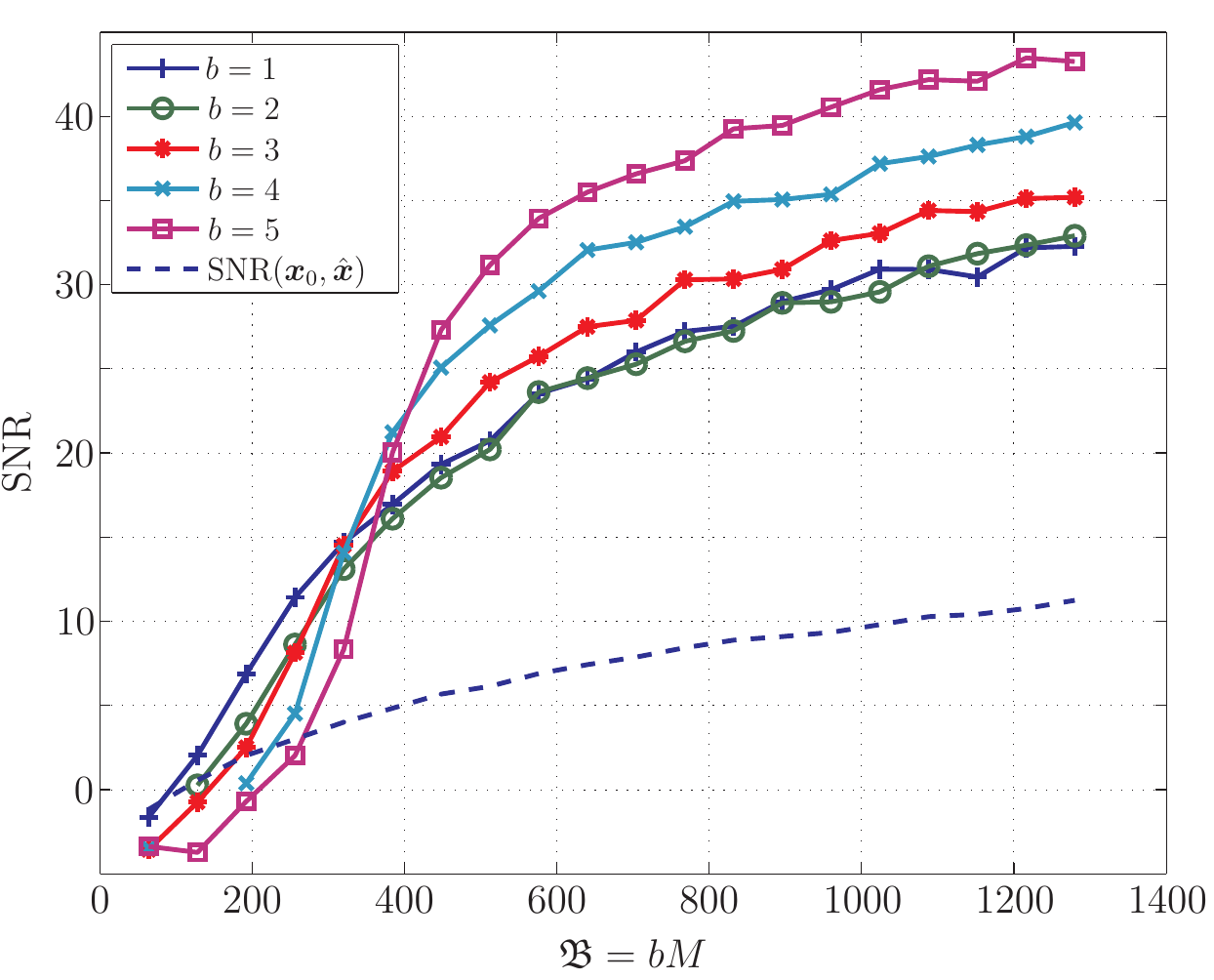}}
  \hspace{-4mm}\subfigure{\includegraphics[width=\comsz]{BPDN_snr}\hspace{-1.3cm}\raisebox{.65cm}{BPDN}}\hspace{\comsep}
  \subfigure{\includegraphics[width=\comsz]{IHT_snr}\hspace{-1.05cm}\raisebox{.65cm}{\
      IHT}}\hspace{\comsep}
  \subfigure{\includegraphics[width=\comsz]{QIHT_snr}\hspace{-1.2cm}\raisebox{.65cm}{QIHT}}\hspace{\comsep}\,\\[-2mm]
  \caption{Comparison between (from left to right) BPDN, IHT and QIHT for several quantization scenarios. The SNR is expressed in dB as a function of the bit budget $\fk B$ and the number of bits $b$ used to quantize each measurement.}%\vspace{-6mm}
  \label{fig:qiht_iht_bpdn_comp}
\end{figure}

An extensive set of simulations has been designed for evaluating the
efficiency of QIHT in comparison with two other methods more suited to
high-resolution quantization, namely, IHT and BPDN. Our objective is
to show that QIHT provides better quality results at least at small
quantization levels. For all experiments, we set $N=1024$, $K=16$ and the $K$-sparse signals were generated 
by choosing their supports uniformly at random amongst the $N \choose
K$ available ones, while their non-zero coefficients were drawn
uniformly at random on the sphere $S^{K-1}\subseteq \Rbb^K$.
For each algorithm, $100$ initial such sparse vectors were generated and the
reconstruction method
was tested for $1\leq b\leq 5$ and for $\fk B=bM \in \{64,
128,\,\cdots, 1280\}$, \ie approximately fixing $M = \lfloor \fk B/b \rfloor$.
For each experimental condition, the quantized $M$-dimensional 
measurement vectors $\bs{y}_b$ was generated as in
\eqref{eq:qcs-model} with a random sensing matrix $\bs \Phi \sim \cl
N^{M\times N}(0,1)$ and according to an optimal Lloyd-Max $b$-bits Quantizer
$\cl Q_b$ (Sec.~\ref{sec:framework}).
IHT and QIHT iterations were both stopped at step $n$ as soon as ${\| \bs x^{(n+1)}-\bs x^{(n)} \|}{\| \bs x^{(n+1)} \|}^{-1} <
10^{-4}$ or if $n=1000$. The BPDN algorithm was solved with the SPGL1 \textsc{Matlab} toolbox~\cite{spgl1}.
In IHT and QIHT, signal sparsity $K$ was assumed known 
and both were set with $\mu = \tinv{M}\big (1 -
\sqrt{\frac{2K}{M}}\big)$. This fits the IHT condition $\mu <
1/(1+\delta_{2K})$ mentioned in Sec.~\ref{sec:qiht-analysis} by
assuming that the RIP radius $\delta_{2K}$ behaves like $\sqrt{2K/M}$,
which is a common assumption in CS. For BPDN, the noise energy was given 
by an oracle installing BPDN in the best reconstruction scenario, \ie
$\epsilon = \|\bs \Phi \xin -\bs y\|_2$. Whatever the reconstruction
method, given an initial signal $\xin\in\Sigma^*_K$ and its
reconstruction $\bs x^*$, the reconstruction quality was measured by ${\rm SNR}(\xin,\bs x^*) =
- 20 \log_{10} \big\|\xin\, -\, \|\bs x^*\|^{-1} \bs x^*\big\|$. In
other words, we focus here on a good ``angular'' estimation of the
signals, adopting therefore a common metric for $b>1$ and for $b=1$, where
amplitude information is lost. Finally, for each method and each couple of $(M,b)$, the SNR was
averaged over the 100 test signals and expressed in dB. 

Fig.~\ref{fig:qiht_iht_bpdn_comp} gathers the SNR performances of the 3
methods as a function of $\fk B$. QIHT outperforms both
BPDN and IHT for the selected scenarios, especially for low bit
quantizers. At high resolution, the gain between QIHT and IHT
decreases as expected from the limit case analysis of QIHT. We can
also notice that, first, there is almost no quality difference
between QIHT at $b=1$ and $b=2$. This could be due to a non-optimality
of the Lloyd-Max quantizer with respect to QIHT reconstruction error minimization.
Second, BPDN and IHT asymptotically present the ``$6$dB per bit''
gain, while QIHT hardly exhibits such behavior only when $b=4\to 5$.   

Finally, in order to test Prop.~\ref{prop:1-bit-Thresholding-guarantee-unif}, 
the SNR reached by the single thresholding solution $\hat{\bs
  x}$ is plotted in dashed in Fig~\ref{fig:qiht_iht_bpdn_comp}-right. Despite its poor behavior compared to QIHT at $b=1$, it outperforms
BPDN at high $\fk B=M$ with a ${\rm SNR} \geq 10$dB at $M=N=1024$. A
curve fitting (no shown here) shows that this SNR increases
a bit faster than $20\log_{10} \sqrt{K/M} + O(1)$.

\sq
\section{Conclusion}
\label{sec:conclusion}
\sq

We have introduced the QIHT algorithm as a generalization of the BIHT
and IHT algorithms aiming at enforcing consistency with quantized observations at any bit resolution. In
particular, we showed that the almost obvious inclusion of the quantization operator
in the IHT recursion is actually related to the implicit minimization of a particular inconsistency
cost $\cl E_b$. This function generalizes
the one-sided $\ell_1$ cost of BIHT and asymptotically converges to
the quadratic fidelity minimized by IHT. There is still a hard work to
be performed in order to prove QIHT convergence and
stability. However, the different ingredients defining it, as $\cl
E_b$, deserve independent analysis extending previous $1$-bit embeddings developed in~\cite{jacques2011robust,plan2012robust,Plan2011}. 

\appendix

\section{Proximity of almost 1-bit consistent sparse vectors}
\label{sec:angul-dist-bounds}

The relation \eqref{eq:almost-consistent-relation} is induced by the
following theorem and by its subsequent
Corollary~\ref{cor:angul-bounds-almost}.
These use the normalized \emph{Hamming distance} between two strings $\bs
a,\bs b\in \{-1,+1\}^M$ defined by $d_{H}(\bs a, \bs b) = \frac{1}{M}\sum_{i=1}^{M} a_{i}
\oplus b_{i}$, where $\oplus$ is the XOR operation such that $a_i \oplus b_i$ equals 0 if $a_i=b_i$ and 1 otherwise.
For shortening the notations, we define also $\bs \varphi(\bs u) :=
\sign(\bs \Phi \bs u)\in\{-1,+1\}^M$ for $\bs u \in \Rbb^N$.

\begin{theorem}
\label{thm:1-bit-stable-embed}
Let $\bs \Phi\sim\cl N^{M\times N}(0,1)$. Fix $r\leq M/2$, $0\leq \eta\leq 1$ and $0<\delta<1$.  If the
number of measurements $M$ satisfies
\begin{equation}
\label{eq:numM}
M - r \geq \tfrac{2}{\delta}\,\big(2K\,\log(N) + r\log(M) + 4K
\log(\tfrac{17}{\delta}) + \log
\tfrac{2e}{\eta}\big),
\end{equation}
then, 
$$
\forall \bs a,\,\bs b \in\Sigma^*_K,\quad d_H\big( \bs \varphi (\bs a), \bs \varphi (\bs b)\big) \leq \tfrac{r}{M} \quad
\Rightarrow\quad \|\bs a -\bs b\|\leq \delta
$$
with probability exceeding $1-\eta$.
\end{theorem}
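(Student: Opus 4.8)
The plan is to establish the logically equivalent contrapositive: with probability at least $1-\eta$, every pair $\bs a,\bs b\in\Sigma^*_K$ with $d_H\big(\bs\varphi(\bs a),\bs\varphi(\bs b)\big)\le r/M$ obeys $\|\bs a-\bs b\|\le\delta$. The backbone is a \emph{reduction to the exactly sign-consistent case} of~\cite{jacques2011robust}. Suppose $\bs\varphi(\bs a)$ and $\bs\varphi(\bs b)$ disagree on a set $\cl D_0\subseteq[M]$ with $\#\cl D_0\le r$; pick any $\cl D\supseteq\cl D_0$ of size exactly $r$ and let $\bs\Phi_{\backslash\cl D}$ be the sub-matrix keeping only the rows of $\bs\Phi$ outside $\cl D$, which is again of the form $\cl N^{(M-r)\times N}(0,1)$. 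On those rows $\bs a$ and $\bs b$ are \emph{exactly} sign-consistent, $\sign(\bs\Phi_{\backslash\cl D}\bs a)=\sign(\bs\Phi_{\backslash\cl D}\bs b)$, so it would be enough that $\bs\Phi_{\backslash\cl D}$ enjoy the $1$-bit embedding property of~\cite{jacques2011robust} --- sign-consistency of two $K$-sparse unit vectors forces their Euclidean distance below $\delta$ --- and that it do so \emph{simultaneously} over all $\cl D$ of size $r$ (the property is monotone in $\cl D$, so sets of size exactly $r$ are the only ones to control).

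I would then finish with a union bound over the $\binom{M}{r}$ choices of $\cl D$. By the sharp form of~\cite{jacques2011robust}, a fixed $\cl N^{(M-r)\times N}(0,1)$ matrix fails this property with probability at most $\eta_0$ as soon as $M-r\ge\tfrac{2}{\delta}\big(2K\log N+4K\log(\tfrac{17}{\delta})+\log\tfrac{2e}{\eta_0}\big)$. Setting $\eta_0=\eta/\binom{M}{r}$ and using $\binom{M}{r}\le M^{r}$ --- hence $\log(1/\eta_0)\le r\log M+\log(1/\eta)$ --- absorbs exactly the $r\log M$ term and turns the condition into~\eqref{eq:numM}, while the total failure probability is $\binom{M}{r}\,\eta_0=\eta$. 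On the complementary event the reduction of the previous paragraph applies to every admissible pair, which gives the contrapositive; the stated ranges of $r$, $\eta$ and $\delta$ only serve to keep~\eqref{eq:numM} a non-vacuous requirement.

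If one prefers a self-contained argument rather than quoting~\cite{jacques2011robust}, the $r=0$ ingredient is the standard covering proof, and this is where the real work lies. For a fixed pair $\bs a,\bs b\in S^{N-1}$ the rows $\bs\phi_i^*$ of $\bs\Phi$ are i.i.d.\ and $\Pr\big[\sign\scp{\bs\phi_i}{\bs a}\neq\sign\scp{\bs\phi_i}{\bs b}\big]=\tfrac{1}{\pi}\arccos\scp{\bs a}{\bs b}=:p_{\bs a\bs b}$, with $p_{\bs a\bs b}>\delta/\pi$ as soon as $\|\bs a-\bs b\|=2\sin\big(\tfrac{1}{2}\arccos\scp{\bs a}{\bs b}\big)>\delta$; so exact consistency has probability $(1-p_{\bs a\bs b})^{M}\le e^{-\delta M/\pi}$, and a union bound over a $\rho$-net of $\Sigma^*_K$ of cardinality $\binom{N}{K}(1+2/\rho)^{K}$ handles the net points. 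The delicate step is passing from the net to all of $\Sigma^*_K$: since $\bs u\mapsto\sign(\bs\Phi\bs u)$ is only piecewise constant, one must prove the uniform anti-concentration fact that no $K$-sparse direction has more than a negligible fraction of its measurements inside a slab $\{\bs u:|\scp{\bs\phi_i}{\bs u}|\le t\}$ of width $t\asymp\rho\sqrt{K}$ about a hyperplane (itself a further net together with a Bernstein bound); this forces the net radius $\rho\asymp\delta$ up to the logarithmic factors and is what yields the $4K\log(17/\delta)$ term. The reduction of the first two paragraphs is routine bookkeeping; it is this discretization step --- which~\cite{jacques2011robust} already carries out --- that holds the technical weight.
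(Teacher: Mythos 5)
Your argument is correct and follows essentially the same route as the paper's own proof: reduce to exact sign-consistency on a sub-matrix of $M-r$ rows, invoke Theorem~2 of~\cite{jacques2011robust} for each fixed row-subset, and union-bound over the at most $\binom{M}{r}\le M^r$ choices, which is precisely where the $r\log M$ term in \eqref{eq:numM} is absorbed (the paper counts all subsets of size at least $M-r$ via $(r+1)(eM/r)^r$, a cosmetic difference). The closing discussion of the $r=0$ covering argument is accurate but, as you note, is already carried out in the cited reference, so both proofs treat it as a black box.
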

This improves the previous theorem proved in \cite{jacques2011robust}.

\begin{proof}
First, notice that if $M\,d_H\big( \bs \varphi (\bs a), \bs \varphi
(\bs b)\big) \leq r$, there exists a $\cl T\subset [M]$ with $|\cl
T|\geq M - r$ such that $\bs \varphi_{\cl T}(\bs a)=  \bs \varphi_{\cl
  T}(\bs b)$.  Let $[M]_r$ be the set of subsets of $[M]$ whose size is bigger than
$M-r$. Using a union bound argument, we have
\begin{align}
&\mathbb P\big[\ \exists\, \cl T\subset [M]_r,\ \exists\, \bs a, \bs
b\in\Sigma^*_K:\ \bs \varphi_{\cl T}(\bs a) =  \bs \varphi_{\cl T}(\bs
b),\ \|\bs a-\bs b\|> \delta\ \big]\nonumber
\\  
&\leq \bigcup_{\cl T\subset [M]_r} \mathbb P\big[\ \exists\, \bs a, \bs
b\in\Sigma^*_K:\ \bs \varphi_{\cl T}(\bs a) =  \bs \varphi_{\cl T}(\bs
b),\ \|\bs a-\bs b\|> \delta\ \big].
\label{eq:un-bound-anysubset}
\end{align}
We know from \cite[Theorem 2]{jacques2011robust} that, as soon as
$$
M' \geq \tfrac{2}{\delta}\,\big(2K\,\log(N) + 4K
\log(\tfrac{17}{\delta}) + \log
\tinv{\eta}\big),
$$  
the random generation of
$\bs \Phi' \sim \cl N^{M'\times N}(0,1)$ fullfils   
$$
\mathbb P\big[\ \exists \bs a, \bs b\in\Sigma^*_K:\ \bs \varphi'(\bs
a)=\bs \varphi'(\bs b),\ \|\bs a-\bs b\|> \delta\ \big]\ \leq\ \eta,
$$
with $\bs \varphi'(\cdot)=\sign(\bs \Phi' \cdot)$. Therefore, for a
given $\cl T \subset [M]_r$ and by setting $\bs \Phi'=(\Id_{\cl T})^T
\bs \Phi$, \ie the matrix obtained by restricting $\bs \Phi$ to the rows
indexed in $\cl T$, we have $\bs \varphi_{\cl T}=\bs \varphi'$, $M'=|\cl T|\geq M - r$ and  
$$
\mathbb P\big[\ \exists\, \bs a, \bs
b\in\Sigma^*_K:\ \bs \varphi_{\cl T}(\bs a) =  \bs \varphi_{\cl T}(\bs
b),\ \|\bs a-\bs b\|> \delta\ \big] \leq \eta
$$
if $M \geq r + \tfrac{2}{\delta}\,\big(2K\,\log(N) + 4K
\log(\tfrac{17}{\delta}) + \log
\tinv{\eta}\big)$.

Under the same condition on $M$, and observing that, for $r\leq \lfloor M/2\rfloor$, $|[M]_r| = \sum_{k=0}^r {M \choose M-k} \leq (r+1) {M \choose
  r}\leq (r+1) (e M/r)^r$, \eqref{eq:un-bound-anysubset} provides
$$
\mathbb P\big[\ \exists\, \cl T\subset [M]_r,\ \exists\, \bs a, \bs
b\in\Sigma^*_K:\ \bs \varphi_{\cl T}(\bs a) =  \bs \varphi_{\cl T}(\bs
b),\ \|\bs a-\bs b\|> \delta\ \big] \leq (r+1) (\tfrac{e M}{r})^r\,\eta.
$$

Analyzing the complementary event and redefining $\eta \leftarrow
(r+1) ({e M}/{r})^r\,\eta$, we get finally
\begin{align*}
&\mathbb P\big[\forall\, \bs a, \bs
b\in\Sigma^*_K:\ M\,d_H\big(\varphi_{\cl T}(\bs a),\bs \varphi_{\cl T}(\bs
b)\big) \leq r,\ \|\bs a-\bs b\|\leq \delta\ \big]\\
&=\ \mathbb P\big[\ \forall\, \cl T\subset [M]_r,\ \forall\, \bs a, \bs
b\in\Sigma^*_K:\ \bs \varphi_{\cl T}(\bs a) =  \bs \varphi_{\cl T}(\bs
b),\ \|\bs a-\bs b\|\leq \delta\ \big] \geq 1 - \eta,
\end{align*}
as soon as
$$
M \geq r + \tfrac{2}{\delta}\,\big(2K\,\log(N) + r\log(M) + 4K
\log(\tfrac{17}{\delta}) + \log
\tfrac{2e}{\eta}\big).
$$

\if 0
In \cite{jacques2011robust}, Theorem 2 shows that the random generation of
$\bs \Phi \sim \cl N^{M\times N}(0,1)$ fullfils   
$$
\mathbb P\big[\ \forall \bs a, \bs b\in\Sigma^*_K,\ \bs \varphi (\bs
a)=\bs \varphi (\bs b)\ \Rightarrow\ \|\bs a-\bs b\|\leq \delta\ \big]\ \geq\ 1-\eta
$$
as soon as
$$
M \geq \tfrac{2}{\delta}\,\big(2K\,\log(N) + 4K
\log(\tfrac{17}{\delta}) + \log
\tinv{\eta}\big).
$$  
Therefore, for any given $\cl T \subset [M]$, we have 
\begin{align*}
&\mathbb P\big[\ \exists\, \bs a, \bs b\in\Sigma^*_K:\ \|\bs a-\bs b\|\leq \delta,\  \exists\,i\in \cl
T:\  \varphi_i(\bs a) =\varphi_i(\bs b)\big]\\
&\leq\ \mathbb P\big[\ \exists\, \bs a, \bs b\in\Sigma^*_K:\ \|\bs a-\bs b\|\leq \delta,\  \exists\,i\in [M]:\  \varphi_i(\bs a) =\varphi_i(\bs b)\big] 
 \leq\ \eta.
\end{align*}
Let $[M]_r$ be the set of subsets of $[M]$ whose size is bigger than
$M-r$. Since, for $r\leq \lfloor M/2\rfloor$, $|[M]_r| = \sum_{k=0}^r {M \choose M-k} \leq (r+1) {M \choose
  r}\leq (r+1) (e M/r)^r$, a union bound provides 
$$
\mathbb P\big[\ \exists\, \cl T\subset [M]_r,\ \exists\, \bs a, \bs b\in\Sigma^*_K:\ \|\bs a-\bs b\|\leq \delta,\  \exists\,i\in \cl
T:\  \varphi_i(\bs a) =\varphi_i(\bs b)\ \big]
 \leq (r+1) (\tfrac{eM}{r})^r\eta,  
$$
under the same condition on $M$. Analyzing the complementary event and
simplifying slightly the condition, we get
\begin{align*}
\mathbb P\big[\ \forall\,\cl T\subset [M]_r,\ \forall \bs a,
\bs b\in\Sigma^*_K:\ \bs\varphi_{\cl T}(\bs
a)=\bs\varphi_{\cl T}(\bs b)\ \Rightarrow\ \|\bs a-\bs b\|\leq \delta\
\big] \leq 1 - \eta,
\end{align*}
as soon as
$$
M \geq \tfrac{2}{\delta}\,\big(2K\,\log(N) + r\log(M) + 4K
\log(\tfrac{17}{\delta}) + \log
\tfrac{2e}{\eta}\big).
$$
Since for any $\bs u, \bs v\in \Sigma^*_K$ such that $M\,d_H\big(\bs\varphi(\bs u), \bs\varphi(\bs v)\big) \leq r$, there exists one set $\cl T'\subset [M]$ such that $M-r \leq
|\cl T'| \leq M$ and $\bs\varphi_{\cl T'}(\bs u)=\bs\varphi_{\cl
  T'}(\bs v)$, the result follows.
\fi
\end{proof}

\begin{corollary}
\label{cor:angul-bounds-almost}
Let $\bs \Phi\sim\cl N^{M\times N}(0,1)$. Fix $r\leq M/2$, $0\leq \eta\leq 1$ and $0<\delta<1$.  If the
number of measurements $M$ satisfies
\begin{equation*}
M \geq \tfrac{2}{\delta}\,\big(2K\log(\max(N,M)) + 4K\log(\tfrac{17}{\delta}) + \log
\tfrac{2e}{\eta}\big),
\end{equation*}
then, 
$$
\forall \bs a,\,\bs b \in\Sigma^*_K,\quad d_H( A(\bs a), A(\bs b)) \leq \tfrac{r}{M} \quad
\Rightarrow\quad \|\bs a -\bs b\|\leq\ \tfrac {K + r}{K}\,\delta
$$
with probability exceeding $1-\eta$.
\end{corollary}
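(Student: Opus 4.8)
The plan is to obtain the corollary from Theorem~\ref{thm:1-bit-stable-embed} by a mere rescaling of its stability radius $\delta$, at the price of the multiplicative loss $\tfrac{K+r}{K}$. Put $\delta':=\tfrac{K+r}{K}\,\delta$ and recall that $A(\bs u)=\bs\varphi(\bs u)=\sign(\bs\Phi\bs u)$. If $\delta'\geq 2$ the claim is trivial, since $\bs a,\bs b\in\Sigma^*_K\subset S^{N-1}$ forces $\|\bs a-\bs b\|\leq 2\leq\delta'$ deterministically; so assume $\delta'<2$, whence $\delta(K+r)=K\delta'<2K$, a bound used below. Assume in addition $\delta'<1$ (the sliver $1\leq\delta'<2$ is commented on at the end). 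Then Theorem~\ref{thm:1-bit-stable-embed} applied with parameters $(\delta',r,\eta)$ states precisely that, with probability at least $1-\eta$, $d_H(A(\bs a),A(\bs b))\leq r/M\Rightarrow\|\bs a-\bs b\|\leq\delta'=\tfrac{K+r}{K}\delta$ for all $\bs a,\bs b\in\Sigma^*_K$, provided \eqref{eq:numM} holds with $\delta'$ in place of $\delta$. So the only thing to verify is that the $M$-bound assumed in the corollary implies this rescaled version of \eqref{eq:numM}.

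To do so I would abbreviate $P:=\log\max(N,M)$ and $C_0:=4K\log\tfrac{17}{\delta}+\log\tfrac{2e}{\eta}$, so that the hypothesis of the corollary reads $M\geq\tfrac{2}{\delta}(2KP+C_0)$. Using $\delta'\geq\delta$ (hence $\tfrac{1}{\delta'}=\tfrac{K}{(K+r)\delta}$ and $\log\tfrac{17}{\delta'}\leq\log\tfrac{17}{\delta}$) together with $\log N,\log M\leq P$, the right-hand side of the rescaled \eqref{eq:numM} is at most $\tfrac{2K}{(K+r)\delta}\big((2K+r)P+C_0\big)$. Now a short computation, using the identity $\tfrac{2K(2K+r)}{K+r}=2K+\tfrac{2K^2}{K+r}$ and the corollary's hypothesis, gives
\[
M-r-\tfrac{2K}{(K+r)\delta}\big((2K+r)P+C_0\big)\ \geq\ r\Big(\tfrac{2(KP+C_0)}{\delta(K+r)}-1\Big),
\]
so that it suffices to have $2(KP+C_0)\geq\delta(K+r)$.

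This last inequality holds with room to spare: since $\delta<1$ we have $C_0\geq 4K\log 17>4K$, hence $2(KP+C_0)\geq 2C_0>8K>K\delta'=\delta(K+r)$ using $\delta'<2$, which closes the reduction. I expect no genuine analytic obstacle here — the delicate point is purely the bookkeeping: matching the coefficient $\tfrac{2}{\delta'}(2K+r)\leq\tfrac{4K}{\delta}$ against the $\tfrac{4K}{\delta}$ of the corollary, and checking that the residual slack absorbs the $-r$ on the left of \eqref{eq:numM}. The one situation this argument does not cover is $1\leq\tfrac{K+r}{K}\delta<2$, where Theorem~\ref{thm:1-bit-stable-embed} cannot be invoked verbatim since it requires radius below $1$; as the relation \eqref{eq:almost-consistent-relation} is only used for $r$ small — so that $\tfrac{K+r}{K}\delta<1$ — this gap is immaterial for the paper, although a fully general statement would require extending Theorem~\ref{thm:1-bit-stable-embed} to radii in $[1,2)$.
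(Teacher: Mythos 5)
Your proof is correct and follows the same route as the paper: apply Theorem~\ref{thm:1-bit-stable-embed} with $\delta$ replaced by $\tfrac{K+r}{K}\delta$ and verify that the corollary's condition on $M$ implies the rescaled condition \eqref{eq:numM}, the $r\log M$ term and the factor $\tfrac{K}{K+r}$ supplying exactly the slack needed to absorb the $-r$ on the left-hand side. The paper's own proof is a one-line sketch of this reduction; your version merely makes the bookkeeping explicit and (rightly) flags the regime $1\leq\tfrac{K+r}{K}\delta<2$, which the paper also silently excludes.
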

\begin{proof}
The proof is obtained from Theorem~\ref{thm:1-bit-stable-embed} by
redefining $\delta \leftarrow \tfrac{K+r}{K}\,\delta$, observing that
$\frac{2}{\delta}\log M\geq 1$ if $M > 1$ and by slightly enforcing
the condition \eqref{eq:numM} on $M$.  
\end{proof}

\end{document}